\newtheorem{thm}    {Theorem}
\newtheorem{lem}     {Lemma}
\newtheorem{condition}  {Condition}
\def\mix{\mathop{\rm mix}}
\newcommand{\bR}{\mathbb{R}}
\newcommand{\bF}{\mathbb{F}}
\def\cA{{\cal A}}
\def\rE{{\rm E}}
\newcommand{\Tr}{{\rm Tr}\,}
\newcommand{\bX}{{\bf X}}
\newcommand{\lleq}{\mathrel{\mathpalette\gl@align<}}
\newcommand{\ggeq}{\mathrel{\mathpalette\gl@align>}}
\newcommand{\gl@align}[2]{
\vbox{\baselineskip\z@skip\lineskip\z@
\ialign{$\m@th#1\hfil##\hfil$\crcr#2\crcr{}_{{}_{(=)}}\crcr}}}
\def\Label#1{\label{#1}\ [\ \text{#1}\ ]\ }
\def\Label{\label}
\begin{document}

\title{{Precise evaluation of leaked information with universal$_2$ privacy amplification in the presence of quantum attacker}}
\titlerunning{Precise evaluation of leaked information}

\author{Masahito Hayashi$^{1,2}$}
\institute{$^{1}$~Graduate School of Mathematics, Nagoya University, Japan. \\
$^{2}$~Centre for Quantum Technologies, National University of Singapore, Singapore. \\
\email{masahito@math.nagoya-u.ac.jp}}
\authorrunning{Masahito Hayashi}

\date{Received:}

\maketitle

\begin{abstract}
We treat 
secret key extraction when the eavesdropper has correlated quantum states.
We propose quantum privacy amplification theorems different from Renner's,
which are based on quantum conditional R\'{e}nyi entropy of order $1+s$.
Using those theorems, we derive an exponential decreasing rate
for leaked information and the asymptotic equivocation rate,
which have not been derived hitherto in the quantum setting.
\end{abstract}


\section{Introduction}
Extracting secret random numbers in the presence of quantum attacker
is one of important topics in quantum information theory.
The classical version of this topic was discussed by \cite{AC93,BBCM,ILL,RW,H-leaked}.
The quantum version is mainly treated by Renner\cite{Ren05} and his collaborators with using universal$_2$ hash function.
Indeed, a universal$_2$ hash function can be implemented with small amount of calculation.
As is shown by Renner\cite{Ren05}, when the classical random variable is correlated with eavesdropper's quantum state,
applying a universal$_2$ hash function as a privacy amplification process,
we obtain a secret random variable.


When the size of generated final random variable is sufficiently small, 
the final bits are almost independent of eavesdropper's quantum state.
Then, it is needed to evaluate the leaked information 
of the protocol using a universal$_2$ hash function.
In order to evaluate the secrecy, 
Renner\cite{Ren05} showed a privacy amplification theorem under the trace norm distance with the conditional R\'{e}nyi entropy of order 2.
Combining this theorem to the smoothing method, he provided the evaluation for the secrecy of final random variable.
However, application of smoothing has several difficulty.
In this paper, we derive another type of privacy amplification theorem
by using the conditional R\'{e}nyi entropy of order $1+s$.
Then, we can directly show the security without smoothing
when the final key size is smaller than the conditional entropy. 
That is, our proof is more direct as is mention below.

In this paper, 
we use a security criterion 
for leaked information and the difference from the uniform distribution by modifying the quantum mutual information.
Using the conditional R\'{e}nyi entropy of order $1+s$,
we propose other types of privacy amplification theorems under the above criterion.
The fundamental theorem for this purpose is derived by extending 
classical privacy amplification theorems obtained by \cite{H-leaked,network}.
Using one of these theorems, we derive an exponential decreasing rate of the criterion.
That is, when the extracted key rate is less than the conditional entropy, the criterion goes to zero exponentially.
Then, we derive an exponential decreasing rate for leaked information,
whose commutative case is the same as that by
\cite{H-leaked}.
Our derivation is contrastive with \cite{H-cq} 
in the point that our method does not employ smoothing method.
Our exponent is better than that given in \cite{H-cq}
under the modified quantum mutual information criterion.
Further, using the Pinsker inequality, we apply our result to the criterion for the trace norm distance.

When the extracted key rate is larger than the conditional entropy, 
the leaked information does not go to zero.
In this case, we focus on the minimum conditional entropy rate.
The rate is called the equivocation rate \cite{Wyner}
and the quantum version has not been treated until now.
Then, we derive the equivocation rate as by treating the minimum leaked information rate.
The smoothing method cannot evaluate the leaked information rate in this case
while the smoothing method can derive lower bounds for exponential decreasing rate\cite{H-cq}.
Since our method directly evaluate the information amount leaked to the eavesdropper,
it enable us to derive the equivocation rate.

This paper is organized as follows.
In Section \ref{s2}, we prepare quantum versions of information quantities.
In Section \ref{s3}, we 
formulate our setting and 
derive the exponents of leaked information when
the key generation rate is less than the conditional entropy rate.
In Section \ref{s7},
we compare our exponents with the exponents given by the smoothing method in 
\cite{H-cq}.
In Section \ref{s4}, we derive
the equivocation rate as the minimum conditional entropy
for a given key generation rate.
The proofs for Theorem \ref{th1} and \ref{th2} are given 
in Appendix.

\section{Information quantities}\Label{s2}
In order to treat 
leaked information 
after universal$_2$ privacy amplification
in the quantum setting,
we prepare several information quantities in a composite system ${\cal H}_a \otimes {\cal H}_E$,
in which,  ${\cal H}_a$ is a classical system spanned by the basis $\{|a\rangle\}$.
When the composite state is $
\rho=
\sum_a P(a)|a\rangle \langle a| \otimes \rho_a$,
the von Neumann entropies 
and 
Renyi entropies
are given as
\begin{align*}
H(A,E|\rho) &:= -\Tr \rho \log \rho \\
H(E|\rho) &:= -\Tr \rho^E \log \rho^E \\
H_{1+s}(A,E|\rho) &:=\frac{-1}{s}\log \Tr \rho^{1+s}  \\
H_{1+s}(E|\rho) &:=\frac{-1}{s}\log \Tr (\rho^E)^{1+s} 
\end{align*}
with $s\in \bR$
and $\rho^E= \Tr_A\rho$.
When we focus on the total system of a given density $\rho$,
$H(A,E|\rho) $ and $H_{1+s}(A,E|\rho)$
are simplified to 
$H(\rho)$ and $H_{1+s}(\rho)$.

Two kinds of quantum versions of 
the conditional entropy and conditional Renyi entropy 
are given for $s \in \bR$:
\begin{align*}
H(A|E|\rho) &:= H(A,E|\rho)-H(E|\rho) \\
\overline{H}(A|E|\rho) &:= -\Tr \rho \log (I_A \otimes  (\rho^E)^{-1/2} \rho I_A \otimes (\rho^E)^{-1/2}) \\
H_{1+s}(A|E|\rho) &:=\frac{-1}{s} \log \Tr \rho^{1+s} I_A \otimes  (\rho^E)^{-s} \\
\overline{H}_{1+s}^*(A|E|\rho) &:=\frac{-1}{s} 
\log \Tr \rho (I_A \otimes (\rho^E)^{-1/2} \rho I_A \otimes (\rho^E)^{-1/2})^s .
\end{align*}
The quantity $H_{1+s}(A|E|\rho)$ is used for the 
exponential decreasing rate for the security criterion
in Section III
while
$\overline{H}_{1+s}^*(A|E|\rho)$ is used for 
our derivation of the equivocation rate
in Section IV.
Indeed, while 
the quantity $\overline{H}_{2}^*(A|E|\rho) $
is the same as 
the quantity $H_2(A|E|\rho)$ given in \cite{Ren05}
and
the quantity $\overline{H}_2(A|E|\rho)$ given in \cite{H-cq},
the quantity $\overline{H}_{1+s}^*(A|E|\rho) $
is different from 
the quantity $\overline{H}_{1+s}(A|E|\rho) $ given in \cite{H-cq}
with $0<s<1$.

Since the functions $s \mapsto s H_{1+s}(A|E|\rho)$ 
and
\par\noindent
$s \mapsto s \overline{H}^*_{1+s}(A|E|\rho)$
are concave
and
$0{H}_{1}(A|E|\rho)=0 \overline{H}_{1}(A|E|\rho)
=0$,
$ {H}_{1+s}(A|E|\rho)$ and $\overline{H}^*_{1+s}(A|E|\rho)$ are monotone decreasing for $s\in \bR$.
Since $\lim_{s \to \infty} \overline{H}^*_{1+s}(A|E|\rho)$ coincides with the min entropy 
$H_{\min}(A|E|\rho) :=- \log \| 
I_A \otimes (\rho^E)^{-1/2} \rho I_A \otimes (\rho^E)^{-1/2}
\|$,
$
\overline{H}^*_{1+s}(A|E|\rho)
\ge 
H_{\min}(A|E|\rho)$.
Since 
$ {H}_{2}(A|E|\rho)=
- \log \Tr \rho 
(\rho^{1/2} (I_A \otimes \rho^E)^{-1}\rho^{1/2})$,
we have
$ 
{H}_{1+s}(A|E|\rho)
\ge {H}_{2}(A|E|\rho)
\ge -
\log \|(\rho^{1/2} (I_A \otimes \rho^E)^{-1}\rho^{1/2})\|
=
-\log \|
I_A \otimes (\rho^E)^{-1/2} \rho I_A \otimes (\rho^E)^{-1/2}
\|=H_{\min}(A|E|\rho) $
for $s\in (0,1]$.
Further, since $\lim_{s \to 0}H_{1+s}(A|E|\rho)=H(A|E|\rho)$
and
$\lim_{s \to 0}\overline{H}^*_{1+s}(A|E|\rho)=\overline{H}(A|E|\rho)$,
we have 
\begin{align}
H(A|E|\rho) & \ge H_{1+s}(A|E|\rho), \\
\overline{H}(A|E|\rho) & \ge \overline{H}^*_{1+s}(A|E|\rho)
\Label{8-15-14} 
\end{align}
for $s\in (0,1]$.


Then, the correlation between $A$ and ${\cal H}_E$
can be evaluated by 
two kinds of quantum versions of
the mutual information
\begin{align}
I(A:E|\rho) &:= D( \rho \| \rho_A \otimes \rho^E) \\
\underline{I}(A:E|\rho) &:= \underline{D}( \rho \| \rho_A \otimes \rho^E) \\
D(\rho\|\sigma) &:= \Tr \rho (\log \rho-\log \sigma) \\
\underline{D}(\rho\|\sigma) &:= \Tr \rho \log (\sigma^{-1/2} \rho \sigma^{-1/2}) .
\end{align}
By using the completely mixed state $\rho_{\mix}^A$ on ${\cal A}$,
two kinds of quantum versions of the mutual information can be modified to
\begin{align}
I'(A:E|\rho) &:= D( \rho \| \rho_{\mix}^A \otimes \rho^E) \nonumber \\
&=I(A:E|\rho)+ D(\rho^A\|\rho_{\mix}^A )\\
&=I(A:E|\rho)+ H(A|\rho_{\mix}^A)-H(A|\rho^A)\\
\underline{I}'(A:E|\rho) &:= \underline{D}( \rho \| \rho_{\mix}^A \otimes \rho^E) ,
\end{align}
which satisfy
\begin{align*}
I(A:E|\rho) & \le I'(A:E|\rho) \\
\underline{I}(A:E|\rho) & \le \underline{I}'(A:E|\rho) 
\end{align*}
and
\begin{align}
H(A|E|\rho) & = -I'(A:E|\rho) +\log |{\cal A}| \Label{1-28-1}\\
\overline{H}(A|E|\rho) & = -\underline{I}'(A:E|\rho) +\log |{\cal A}| \Label{8-15-17} .
\end{align}

Indeed, the quantity $I(A:E|\rho^{A,E})$ 
represents the amount of information leaked to $E$,
and the remaining quantity $D(\rho^A\|\rho_{\mix}^A )$
describes the difference of the random number $A$ from the uniform random number.
So, if 
the quantity $I'(A:E|\rho^{A,E})$ is small,
we can conclude that the random number $A$ has less correlation with $E$ 
and is close to the uniform random number.
In particular,
if the quantity $I'(A:E|\rho^{A,E})$ goes to zero,
the mutual information $I(A:E | \rho^{A,E})$ 
goes to zero,
and the state $\rho^{A}$ goes to 
the completely mixed state $\rho_{\mix}^A$.
Hence, we can adopt 
the quantity $I'(A:E|\rho^{A,E})$ as a criterion for qualifying the secret random number.

Using the trace norm, we can evaluate the secrecy for the state $\rho$ as follows:
\begin{align}
d_1(A:E|\rho):=\| \rho -\rho^A \otimes \rho^E \|_1.
\end{align}
Taking into account the randomness, 
Renner \cite{Ren05} defined the following criteria for security of a secret random number:
\begin{align}
d_1'(A:E|\rho):=
\| \rho -\rho_{\mix}^A \otimes \rho^E \|_1.
\end{align}
Using the quantum version of Pinsker inequality,
we obtain
\begin{align}
d_1(A:E|\rho )^2  &\le I(A:E|\rho) \Label{8-19-14-a} \\
d_1'(A:E|\rho )^2 &\le I'(A:E|\rho).\Label{8-19-14}
\end{align}

When we apply the function $f$ to the classical random number $a \in \cA$,
$H(f(A),E|\rho) \le H(A,E|\rho)$, i.e., 
\begin{align}
H(f(A)|E|\rho) \le H(A|E|\rho).\Label{8-14-1}
\end{align}

As is shown in \cite{H-cq},
when we apply 
a quantum operation ${\cal E}$ on ${\cal H}_E$, 
since
it does not act on the classical system ${\cal A}$,
\begin{align}
H(A|E|{\cal E}(\rho)) &\ge H(A|E|\rho) \\
H_{1+s}(A|E|{\cal E}(\rho)) &\ge H_{1+s}(A|E|\rho). \Label{8-15-12} 
\end{align}

When the state $\sigma$
has the spectral decomposition $\sigma= \sum_i s_i E_i$,
the pinching map ${\cal E}_\sigma$ is defined as
\begin{align}
{\cal E}_\sigma(\rho):=\sum_{i} E_i \rho E_i.
\end{align}
When $v$ is the number of the eigenvalues of $\sigma$,
the inequality 
\begin{align}
\rho \le v {\cal E}_{\sigma}(\rho)
\Label{8-15-23}
\end{align}
holds\cite[Lemma 3.8]{Hayashi-book},\cite{H2001}.
Hence, we obtain
\begin{align}
\sigma^{-1/2} \rho \sigma^{-1/2}
\le
v \sigma^{-1/2} {\cal E}_{\sigma}(\rho)  \sigma^{-1/2}. \Label{8-26-3}
\end{align}
As $x \mapsto \log x$ is matrix monotone,
\begin{align}
\log \sigma^{-1/2} \rho \sigma^{-1/2}
\le
\log v+ 
\log \sigma^{-1/2} {\cal E}_{\sigma}(\rho)  \sigma^{-1/2} .
\end{align}
Since
\begin{align}
\Tr \rho \log \sigma^{-1/2} {\cal E}_{\sigma}(\rho)  \sigma^{-1/2}
=\Tr {\cal E}_{\sigma}(\rho) \log \sigma^{-1/2} {\cal E}_{\sigma}(\rho)  \sigma^{-1/2} ,
\end{align}
we obtain 
\begin{align}
D(\rho\|\sigma)
\le 
D({\cal E}_{\sigma}(\rho)\|\sigma) +\log v=
\underline{D}({\cal E}_{\sigma}(\rho)\|\sigma) +\log v .
\Label{8-15-8-a}
\end{align}
Therefore, when $v$ is the number of the eigenvalues of 
$\rho^E:=\sum_a p(a)\rho_a^E$,
an inequality 
\begin{align}
I(A:E|\rho) 
& \le 
I(A:E|{\cal E}_{\rho^E}(\rho)) +\log v \nonumber \\
&= \underline{I}(A:E|{\cal E}_{\rho^E}(\rho)) +\log v 
\Label{8-15-8} 
\end{align}
holds.
Using these relations, we can show the following lemma.
\begin{lem}
\begin{align}
\overline{H}^*_{1+s}(A|E|\rho)
\ge H_{1+s}(A|E|\rho).\Label{8-29-20}
\end{align}
\end{lem}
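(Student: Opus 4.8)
The plan is to convert the entropy inequality into a single trace inequality and then establish the latter by an Araki--Lieb--Thirring/interpolation argument. Since $s>0$, the map $x\mapsto-\tfrac1s\log x$ is strictly decreasing, so \eqref{8-29-20} is equivalent to
\begin{align*}
\Tr \rho\,\bigl(I_A\otimes(\rho^E)^{-1/2}\,\rho\,I_A\otimes(\rho^E)^{-1/2}\bigr)^{s}
\;\le\;
\Tr \rho^{1+s}\,\bigl(I_A\otimes(\rho^E)^{-s}\bigr).
\end{align*}
Writing $\sigma:=I_A\otimes\rho^E$ and $\tilde\rho:=\sigma^{-1/2}\rho\,\sigma^{-1/2}$, the assertion reads $\Tr\rho\,\tilde\rho^{\,s}\le\Tr\rho^{1+s}\sigma^{-s}$, and when $\rho$ commutes with $\sigma$ the two sides are equal. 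Hence the lemma records exactly the defect produced by the non-commutativity of $\rho$ and $\sigma$, and it is this defect that must be shown to have the right sign.

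The natural tool is the Araki--Lieb--Thirring inequality. As a warm-up it already controls the fully sandwiched R\'enyi quantity: taking exponent $1+s\ge1$ gives
\begin{align*}
\Tr\bigl(\sigma^{-\frac{s}{2(1+s)}}\,\rho\,\sigma^{-\frac{s}{2(1+s)}}\bigr)^{1+s}\;\le\;\Tr\rho^{1+s}\sigma^{-s},
\end{align*}
which is the inequality between the sandwiched and the Petz forms of the same order. The quantity $\Tr\rho\,\tilde\rho^{\,s}$ appearing in $\overline{H}^{*}_{1+s}$ differs from this in two ways: $\sigma$ enters only through the fixed power $\sigma^{-1/2}$, and there is an extra factor $\rho$ standing outside the $s$-th power.

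This outer factor is the main obstacle: because $\rho$ sits outside the power, $\Tr\rho\,\tilde\rho^{\,s}$ is not a symmetric function of the spectrum of a single positive operator, so the log-majorization that underlies Araki--Lieb--Thirring does not by itself finish the job. I would remove the weight by a complex-interpolation (Hadamard three-line) argument: form an analytic family on the strip $0\le\Re z\le1$ agreeing with $\Tr\rho\,\tilde\rho^{\,s}$ at the relevant real point, and estimate its two boundary lines by the sandwiched trace above. Two checks confirm the direction. At $s=1$ the inequality collapses to $\Tr\rho\,\sigma^{-1/2}\rho\,\sigma^{-1/2}\le\Tr\rho^{2}\sigma^{-1}$, i.e.\ $\Tr[\rho,\sigma^{-1/2}]\,[\rho,\sigma^{-1/2}]^{\dagger}\ge0$; and letting $s\to0$ reproduces the von Neumann statement $\overline{H}(A|E|\rho)\ge H(A|E|\rho)$, i.e.\ $\underline{D}\le D$.
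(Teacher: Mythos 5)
Your reduction of the lemma to the trace inequality $\Tr \rho\,\tilde\rho^{\,s}\le\Tr \rho^{1+s}\sigma^{-s}$ is correct, and your endpoint checks are sound: at $s=1$ the gap is indeed $\tfrac12\Tr[\rho,\sigma^{-1/2}][\rho,\sigma^{-1/2}]^{\dagger}\ge 0$, and the $s\to 0$ limit is the known inequality $\underline{D}\le D$. But the proof stops exactly where the work begins. For $s\in(0,1)$ you offer only the sentence ``I would remove the weight by a complex-interpolation (Hadamard three-line) argument'': no analytic family is written down, no boundedness or boundary estimates are verified, and the direction of the resulting bound is not checked. Moreover, your plan silently relies on an intermediate inequality --- that the weighted quantity $\Tr\rho\,\tilde\rho^{\,s}$ is dominated by the sandwiched trace $\Tr\bigl(\sigma^{-\frac{s}{2(1+s)}}\rho\,\sigma^{-\frac{s}{2(1+s)}}\bigr)^{1+s}$ --- which is itself a nontrivial claim about a non-standard (not $\alpha$-$z$) trace functional; you neither state it precisely nor prove it. As it stands this is a plausible research plan, not a proof, and the key step is missing.

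For comparison, the paper's argument is entirely elementary and sidesteps the single-shot trace inequality. It uses the pinching bound $\rho\le v\,{\cal E}_{\rho^E}(\rho)$ (inequality (\ref{8-15-23})), operator monotonicity of $x\mapsto x^s$, and the fact that for the pinched state (which commutes with $I_A\otimes\rho^E$) the two conditional R\'enyi entropies coincide; together with (\ref{8-15-12}) this gives $\overline{H}^*_{1+s}(A|E|\rho)\ge H_{1+s}(A|E|\rho)-\log v$. The residual $\log v$ is then killed by additivity: applying the bound to $\rho^{\otimes n}$, where $v_n$ grows only polynomially, dividing by $n$, and letting $n\to\infty$. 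If you want to salvage your route, you must either carry out the interpolation in full (specifying the family and both boundary bounds) or prove the weighted-versus-sandwiched comparison directly; otherwise the pinching argument is the efficient path.
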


\begin{proof}
Applying (\ref{8-26-3}) to the case of $\sigma= \rho^E$,
we obtain
\begin{align*}
(\rho^E)^{-1/2} 
\rho
(\rho^E)^{-1/2} 
\le 
v 
(\rho^E)^{-1/2} {\cal E}_{\rho^E}(\rho) (\rho^E)^{-1/2} .
\end{align*}
Since $x \to x^s$ is matrix monotone,
we obtain
\begin{align*}
[(\rho^E)^{-1/2} 
\rho
(\rho^E)^{-1/2} ]^s
\le 
v^s 
[(\rho^E)^{-1/2} {\cal E}_{\rho^E}(\rho) (\rho^E)^{-1/2} ]^s.
\end{align*}
Hence,
\begin{align*}
& e^{-s\overline{H}^*_{1+s}(A|E|\rho)}
= \Tr \rho
[(\rho^E)^{-1/2} 
\rho
(\rho^E)^{-1/2} ]^s \\
\le &
v^s 
\Tr \rho
[(\rho^E)^{-1/2} {\cal E}_{\rho^E}(\rho) (\rho^E)^{-1/2} ]^s\\
=&
v^s 
\Tr 
{\cal E}_{\rho^E}(\rho)
[(\rho^E)^{-1/2} {\cal E}_{\rho^E}(\rho) (\rho^E)^{-1/2} ]^s
=
v^s  e^{-s\overline{H}^*_{1+s}(A|E|{\cal E}_{\rho^E}(\rho) )}\\
= & 
v^s  e^{-s H_{1+s}(A|E|{\cal E}_{\rho^E}(\rho) )}
\le 
v^s  e^{-s H_{1+s}(A|E|\rho )},
\end{align*}
where (\ref{8-15-12}) is used in the final inequality.
Letting $v_n$ be the number of eigenvalues of $(\rho^E)^{\otimes n}$,
we obtain
\begin{align*}
& n \overline{H}^*_{1+s}(A|E|\rho)
+\frac{\log v_n^s}{s}
=
\overline{H}^*_{1+s}(A|E|\rho^{\otimes n})
+\frac{\log v_n^s}{s} \\
\ge  &
H_{1+s}(A|E|\rho^{\otimes n} )
=n H_{1+s}(A|E|\rho ).
\end{align*}
Taking the limit $n \to \infty$,
we obtain (\ref{8-29-20}).
\end{proof}

\section{Formulation and exponential decreasing rate}\Label{s3}
We consider the secure key extraction problem from
a common classical random number $a \in \cA$ which has been partially eavesdropped as quantum states by Eve.
For this problem, it is assumed that Alice and Bob share a common classical random number $a \in \cA$,
and Eve has a quantum state $\rho_a \in {\cal H}_E$, which is correlated to the random number $a$. 
The task is to extract a common random number 
$f(a)$ from the random number $a \in \cA$, which is almost independent of 
Eve's quantum state.
Here, Alice and Bob are only allowed to apply the same function $f$ to the common random number $a \in \cA$ as Fig. \ref{f1}.
Now, we focus on an ensemble of the functions $f_{\bX}$ from 
$\cA$ to $\{1, \ldots, M\}$, where $\bX$ denotes a random variable describing 
the stochastic behavior of the function $f$.
An ensemble of the functions $f_{\bX}$ is called universal$_2$ 
when it satisfies the following condition\cite{Carter}:
\begin{condition}\Label{C1}
$\forall a_1 \neq \forall a_2\in \cA$,
the probability that $f_{\bX}(a_1)=f_{\bX}(a_2)$ is
at most $\frac{1}{M}$.
\end{condition}

\begin{figure}[htbp]
\begin{center}
\scalebox{0.4}{\includegraphics[scale=0.8]{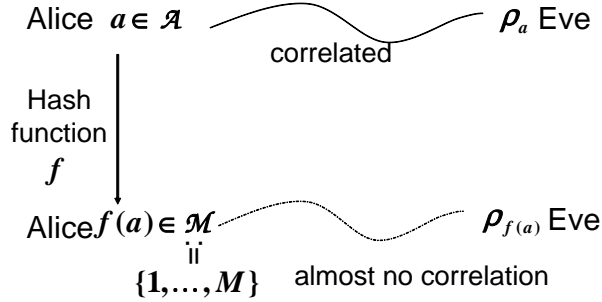}}
\end{center}
\caption{
Application of hash function}
\Label{f1}
\end{figure}%

Indeed, when the cardinality $|\cA|$ is a power of a prime power $q$
and $M$ is another power of the same prime power $q$,
an ensemble $\{f_{\bX}\}$ 
satisfying the both conditions
is given by the 
the concatenation of Toeplitz matrix and the identity 
$(\bX,I)$\cite{Krawczyk}
only with $\log_q |\cA|-1$ random variables taking values in the finite filed $\bF_q$.
That is, the matrix $(\bX,I)$ has small complexity.

\begin{thm}\Label{th1}
When the ensemble of the functions $\{f_{\bX}\}$ is 
universal$_2$, it satisfies 
\begin{align}
& 
I(f_{\bX}(A):E,\bX|\rho,P^{\bX}) 
\le 
I'(f_{\bX}(A):E,\bX|\rho,P^{\bX}) 
=\rE_\bX 
I'(f_{\bX}(A):E|\rho) \nonumber \\
\le &
\frac{v^s M^s}{s} e^{- s{H}_{1+s}(A|E|\rho)} 
=  v^s \frac{e^{s(\log M-H_{1+s}(A|E|\rho))}}{s},
\Label{8-14-2}
\end{align}
where $v$ is the number of eigenvalues of $\rho^E$.
\end{thm}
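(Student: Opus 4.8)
The plan is to prove the sharper estimate $\rE_\bX I'(f_\bX(A):E|\rho)\le\frac{M^s}{s}e^{-s\overline{H}^*_{1+s}(A|E|\rho)}$ in terms of the sandwiched quantity, and then to invoke Lemma~(\ref{8-29-20}): its proof in fact establishes the operator-free estimate $e^{-s\overline{H}^*_{1+s}(A|E|\rho)}\le v^s e^{-sH_{1+s}(A|E|\rho)}$, which replaces $\overline{H}^*_{1+s}$ by $H_{1+s}$ at the cost of the multiplicative factor $v^s$ and yields exactly (\ref{8-14-2}). In (\ref{8-14-2}) the first inequality is the general relation $I\le I'$, and the displayed equality holds because, once $\bX$ is adjoined as a classical register, both the joint state and the reference $\frac1M I_M\otimes\rho^{E,\bX}$ are block diagonal in $\bX$, so the relative entropy decomposes as $\rE_\bX$ of the per-$\bX$ divergences.

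Fix a value of $\bX$ and write $\omega=\sum_m|m\rangle\langle m|\otimes\rho_m$ with $\rho_m=\sum_{a:f_\bX(a)=m}P(a)\rho_a$ and $\sigma:=\rho^E=\omega^E$. First I would linearise. Taking $\tau=\frac1M I_M\otimes\sigma$, I use $D(\omega\|\tau)\le\underline{D}(\omega\|\tau)\le\frac1s\log\Tr\omega(\tau^{-1/2}\omega\tau^{-1/2})^s$, where the first step is the divergence inequality $D\le\underline{D}$ and the second is monotonicity of the order (equivalently the stated concavity of $s\mapsto s\overline{H}^*_{1+s}$ together with the limit $\underline{D}_{1+s}\to\underline{D}$ as $s\to0$). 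Since $\tau^{-1/2}=\sqrt M\,I_M\otimes\sigma^{-1/2}$, this reads $I'(f_\bX(A):E|\rho)\le\frac1s\log(M^s\sum_m\Tr\rho_m G_m^s)$ with $G_m:=\sigma^{-1/2}\rho_m\sigma^{-1/2}$, and then $\log x\le x-1$ gives $I'(f_\bX(A):E|\rho)\le\frac1s(M^s\sum_m\Tr\rho_m G_m^s-1)$.

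The heart of the argument is to average $\sum_m\Tr\rho_m G_m^s$ over the universal$_2$ family. I would keep one factor $\rho_b$ outside, writing $\sum_m\Tr\rho_m G_m^s=\sum_b P(b)\Tr\rho_b G_{f_\bX(b)}^s$, and split $G_{f_\bX(b)}=P(b)G_b+Y_b$ with $Y_b=\sum_{a\ne b,\,f_\bX(a)=f_\bX(b)}P(a)G_a$. Operator subadditivity of $x\mapsto x^s$ for $0<s\le1$ gives $G_{f_\bX(b)}^s\le P(b)^sG_b^s+Y_b^s$, and since $\rho_b\ge0$ the functional $\Tr\rho_b(\cdot)$ is positive, so this survives taking traces; this is precisely why I work with the sandwiched $G_m$ rather than with $\rho_m^{1+s}\sigma^{-s}$ directly, where the relevant functional $\Tr(\cdot)\rho_b\sigma^{-s}$ fails to be positive. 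For the remainder, operator concavity of $x\mapsto x^s$ yields $\rE_\bX Y_b^s\le(\rE_\bX Y_b)^s$, while Condition~\ref{C1} gives $\rE_\bX Y_b\le\frac1M\sum_a P(a)G_a=\frac1M\sigma^{-1/2}\sigma\sigma^{-1/2}\le\frac1M I$, so that $\Tr\rho_b(\rE_\bX Y_b)^s\le M^{-s}$. Summing the weight-$P(b)$ contributions, the diagonal terms $\sum_b P(b)^{1+s}\Tr\rho_b G_b^s$ assemble into $e^{-s\overline{H}^*_{1+s}(A|E|\rho)}$, and hence $\rE_\bX\sum_m\Tr\rho_m G_m^s\le e^{-s\overline{H}^*_{1+s}(A|E|\rho)}+M^{-s}$.

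Finally, multiplying by $M^s$ turns the extra $M^{-s}$ into $+1$, which cancels the $-1$ from the linearisation, giving $\rE_\bX I'(f_\bX(A):E|\rho)\le\frac{M^s}{s}e^{-s\overline{H}^*_{1+s}(A|E|\rho)}$; Lemma~(\ref{8-29-20}) then delivers (\ref{8-14-2}). The main obstacle is exactly the non-commutativity of $\rho_m$ and $\sigma$: the universal$_2$ averaging, built from operator subadditivity and operator Jensen, can only be executed after passing to the sandwiched operators $G_m$ so that the surviving weight is a positive functional, and the return from $\overline{H}^*_{1+s}$ to the Petz-type $H_{1+s}$ of the statement is what forces the multiplicative $v^s$ supplied by the pinching Lemma.
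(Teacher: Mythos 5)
Your argument breaks at the step ``operator subadditivity of $x\mapsto x^s$ for $0<s\le 1$ gives $G_{f_{\bX}(b)}^s\le P(b)^sG_b^s+Y_b^s$.'' The operator inequality $(A+B)^s\le A^s+B^s$ is \emph{false} for non-commuting positive operators. Take $A=|e_1\rangle\langle e_1|$ and $B=|v\rangle\langle v|$ with $v=(e_1+e_2)/\sqrt{2}$: then $A^s+B^s=A+B$, while $A+B$ has an eigenvalue $1-\tfrac{1}{\sqrt2}\in(0,1)$, so $(A+B)^s$ exceeds $A^s+B^s$ on the corresponding eigenvector. Since you then hit this operator inequality with the positive functional $\Tr\rho_b(\cdot)$ for an essentially arbitrary $\rho_b$, nothing saves the step (McCarthy's trace subadditivity does not apply to $\Tr\rho_b(\cdot)$). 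This is exactly the non-commutativity obstacle you identify at the end, and passing to the sandwiched operators $G_m$ does not remove it. The paper's proofs are arranged precisely to avoid it: one first averages over $\bX$ (operator Jensen), then bounds the cross term by $\tfrac1M\rho^E$ (operator monotonicity), so that after sandwiching the cross term becomes $\tfrac1M I$, and only \emph{then} splits, using $(I+X)^s\le I+X^s$ (Lemma~\ref{L9}) or $\log(I+X)\le\tfrac1sX^s$ (Lemma~\ref{L10}), which are legitimate because $I$ commutes with everything. Your split-first ordering is the fatal difference; reordering as in the paper repairs the argument.

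Two further remarks. First, your opening reduction $I'\le\underline{I}'$ rests on $D(\rho\|\sigma)\le\underline{D}(\rho\|\sigma)$, i.e.\ Umegaki $\le$ Belavkin--Staszewski. This is true (Hiai--Petz) but nontrivial and nowhere established in the paper; the paper instead passes from $I'$ to $\underline{I}'$ via the pinching inequality (\ref{8-15-8}), which is where the factor $v^s$ in (\ref{8-14-2}) actually originates. Second, if your route were repaired (reorder the averaging, and justify $D\le\underline{D}$), it would essentially reprove Theorem~\ref{th2} and, combined with Lemma~1, yield the bound $\tfrac{M^s}{s}e^{-sH_{1+s}(A|E|\rho)}$ \emph{without} the $v^s$ factor --- strictly stronger than the stated theorem. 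That the claimed conclusion is stronger than what the paper proves should itself have been a warning sign that some step needed extra scrutiny.
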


That is, there exists a function $f:{\cal A}\to \{1,\ldots, M\}$ such that
\begin{align}
I'(f(A):E|\rho) 
\le v^s \frac{e^{s(\log M-H_{1+s}(A|E|\rho))}}{s}.
\end{align}


Next, we consider the case when our state 
is given by the $n$-fold independent and identical state 
$\rho$, i.e., $\rho^{\otimes n}$.
We define the optimal generation rate
\begin{align*}
&G(\rho) \\
:=&
\sup_{\{(f_n,M_n)\}}
\left\{
\lim_{n\to\infty} \frac{\log M_n}{n}
\left|
\!
\begin{array}{l}
\displaystyle
\lim_{n\to\infty} \frac{I(f_n(A):E|\rho^{\otimes n})}{n}=0 \\
\displaystyle
\lim_{n\to\infty} \frac{H(f_n(A)|\rho^{\otimes n} )}{\log M_n}=1
\end{array}
\! 
\right. \right\} \\
=&
\sup_{\{(f_n,M_n)\}}
\left\{
\lim_{n\to\infty} \frac{\log M_n}{n}
\left|
\lim_{n\to\infty} \frac{I'(f_n(A):E|\rho^{\otimes n})}{n}=0
\right. \right\} ,
\end{align*}
whose classical version is treated by \cite{AC93}.
The second equation holds as follows.
the condition $
\lim_{n\to\infty} \frac{H(f_n(A)|\rho^{\otimes n} )}{\log M_n}=1$
is equivalent with 
$ \lim_{n\to\infty} \frac{D(\rho^{f_n(A)} \| \rho^{f_n(A)}_{\mix} ) }{n}=0$.
Hence, 
$\lim_{n\to\infty} \frac{I(f_n(A):E|\rho^{\otimes n})}{n}=0 $
and
$\lim_{n\to\infty} \frac{H(f_n(A)|\rho^{\otimes n} )}{\log M_n}=1$
if and only if
$\lim_{n\to\infty} \frac{I'(f_n(A):E|\rho^{\otimes n})}{n}=0$.

When the generation rate $R= \lim_{n\to\infty} \frac{\log M_n}{n}$ is smaller than $H(A|E)$,
there exists a sequence of functions $f_n:{\cal A}\to \{1,\ldots, e^{nR} \}$ such that
\begin{align}
I'(f_n(A):E|\rho^{\otimes n})
\le v_n^s \frac{e^{s(R-H_{1+s}(A|E|\rho^{\otimes n} ))}}{s},\Label{8-14-4}
\end{align}
where $v_n$ is the number of eigenvalues of $(\rho^E)^{\otimes n}$,
which is a polynomial increasing for $n$.
Since $\lim_{s \to 0}H_{1+s}(A|E|\rho))=H(A|E|\rho))$,
there exists a number $s \in (0,1]$ such that
$s(R-H_{1+s}(A|E|\rho))>0$.
Thus, the right hand side of (\ref{8-14-4}) goes to zero exponentially.
Conversely,
due to (\ref{8-14-1}),
any sequence of functions $f_n: {\cal A}^n \mapsto \{1, \ldots, e^{nR} \}$ satisfies that
\begin{align}
\lim_{n \to \infty} \frac{H(f_{n}(A)|E|\rho^{\otimes n})}{n} \le 
\frac{H(A|E|\rho^{\otimes n})}{n} = H(A|E|\rho).\Label{8-14-5}
\end{align}
When
$\lim_{n\to\infty} \frac{H(f_n(A)|\rho^{\otimes n} )}{nR}=1$,
\begin{align}
\lim_{n \to \infty} 
\frac{I(f_{n}(A):E|\rho^{\otimes n})}{n} 
&= R- \lim_{n \to \infty} \frac{H(f_{n}(A)|E|\rho^{\otimes n})}{n} \nonumber \\
& \ge 
R- H(A|E|\rho). 
\end{align}
That is, when $R>H(A|E|\rho)$,
$\frac{I(f_{n}(A):E|\rho^{\otimes n})}{n} $ does not go to zero.
Hence, we obtain
\begin{align}
G(\rho) =H(A|E|\rho)
\end{align}

In order to treat the speed of this convergence,
we focus on the supremum of  
the {\it exponentially decreasing rate (exponent)} of 
$I'(f_n(A):E|\rho^{\otimes n})$ for a given $R$
\begin{align*}
& e_I(\rho | R) \\
:=& \!\!\!
\sup_{\{(f_n,M_n)\}}\!\!
\left\{\!
\lim_{n\to\infty} \! \frac{-\log I'(f_n(A):E|\rho^{\otimes n})}{n}
\!\left|
\lim_{n\to\infty} \!\!\frac{-\log M_n}{n} 
\!\le\! R\!
\right. \right\}.
\end{align*}
Since the relation
$s{H}_{1+s}(A|E|\rho^{\otimes n})= n s{H}_{1+s}(A|E|\rho)$ holds,
the inequality (\ref{8-14-4}) implies that
\begin{align}
e_I(\rho|R) 
&\ge 
e_H(\rho|R):=
\max_{0 \le s \le 1}  s{H}_{1+s}(A|E|\rho)-sR \nonumber \\
&= \max_{0 \le s \le 1}  s (H_{1+s}(A|E|\rho )-R),
\Label{4-16-4}
\end{align}
whose commutative version coincides with the bound given in \cite{H-leaked}.

Next, we apply our evaluation to the criterion $d_1'(A:E|\rho)$.
When $\{f_{\bX}\}$ satisfies Condition \ref{C1},
combining (\ref{8-19-14}) and (\ref{8-14-2}),
we obtain
\begin{align}
\rE_{\bX} d_1'(f_{\bX}(A):E|\rho)
\le &
\sqrt{\rE_{\bX} d_1'(f_{\bX}(A):E|\rho)^2} \nonumber \\
\le &
\frac{v^{s/2} M^{s/2}}{\sqrt{s}} e^{- \frac{s}{2}H_{1+s}(A|E|\rho)} .
\Label{8-26-5}
\end{align}
That is, in the $n$-fold asymptotic setting, when the generation key rate is $R$,
we focus on the supremum of  
the {\it exponentially decreasing rate (exponent)} of 
$I(f_n(A):E|\rho^{\otimes n})$ for a given $R$
\begin{align*}
& e_d(\rho | R) \\
:=& \!\!\!
\sup_{\{(f_n,M_n)\}}\!\!
\left\{\!
\lim_{n\to\infty} \! \frac{-\log d_1'(f_n(A):E|\rho^{\otimes n})}{n}
\!\left|\!
\lim_{n\to\infty} \!\!\frac{-\log M_n}{n} 
\!\le\! R\!
\right. \right\}.
\end{align*}
Then, 
the inequality (\ref{8-26-5}) implies that
$e_d(\rho|R) \ge \frac{e_H(\rho|R)}{2}$,
whose commutative version is smaller than 
the bound given in \cite{H-tight}.

\section{Comparison with smoothing method}\Label{s7}
The paper \cite{H-cq} derived lower bounds
for $e_I(\rho|R)$ and $e_d(\rho|R)$.
In order describe them, 
we introduce an information quantity $\phi(s|A|E|\rho^{A,E})$:
\begin{align*}
\phi(s|A|E|\rho^{A,E})
&:=\log 
\Tr_E (\Tr_A 
(\rho^{A,E})^{1/(1-s)})^{1-s} \\
&=\log 
\Tr_E 
(\sum_a P^A(a)^{1/(1-s)} \rho_a^{1/(1-s)} )^{1-s} .
\end{align*}
This quantity satisfies the following lemma.
\begin{lem}\Label{l2-b}
\cite[Lemma 11]{H-cq}
The inequalities 
\begin{align}
s H_{1+s}(A|E|\rho^{A,E} ) 
\ge &
-\phi(s|A|E|\rho^{A,E})\Label{ineq-7-23-2} \\
s H_{1+s}(A|E|\rho^{A,E})
\le &
-(1+s)\phi(\frac{s}{1+s}|A|E|\rho^{A,E})
\Label{8-26-8}
\end{align}
hold for $0 \le s\le 1$.
\end{lem}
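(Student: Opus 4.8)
The plan is to reduce both inequalities to elementary trace inequalities in the three positive operators
\[
\omega := \Tr_A \rho^{1+s},\qquad \tau := \Tr_A \rho^{1/(1-s)},\qquad G := \rho^E
\]
on $\mathcal H_E$, and then to handle the two directions by quite different tools. Since $(\rho^E)^{-s}$ acts only on $E$, we have $e^{-sH_{1+s}(A|E|\rho)} = \Tr \rho^{1+s}(I_A\otimes (\rho^E)^{-s}) = \Tr_E[\omega\, G^{-s}]$. Moreover $\phi(s|A|E|\rho) = \log \Tr_E \tau^{1-s}$ directly, and, using $1/(1-\tfrac{s}{1+s}) = 1+s$, also $\phi(\tfrac{s}{1+s}|A|E|\rho) = \log \Tr_E \omega^{1/(1+s)}$. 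Hence (\ref{ineq-7-23-2}) is equivalent to $\Tr_E[\omega\, G^{-s}] \le \Tr_E \tau^{1-s}$, and (\ref{8-26-8}) is equivalent to $\Tr_E[\omega\, G^{-s}] \ge (\Tr_E \omega^{1/(1+s)})^{1+s}$.

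The second of these is immediate from the reverse Hölder inequality for the trace: for positive $X,Y$ and $0<p<1$ with conjugate exponent $q=p/(p-1)<0$, one has $\Tr[XY]\ge \|X\|_p\|Y\|_q$. Taking $X=\omega$, $Y=G^{-s}$, $p=\tfrac1{1+s}$ and $q=-\tfrac1s$ gives $\|X\|_p=(\Tr_E\omega^{1/(1+s)})^{1+s}$ and $\|Y\|_q=(\Tr_E G)^{-s}=1$, the last equality because $\Tr\rho^E=1$. Equivalently, this is Sibson's identity for the Petz Rényi divergence: $\rho^E$ is a suboptimal reference for $\min_\sigma \Tr[\rho^{1+s}(I_A\otimes\sigma)^{-s}]$, whose minimizer is proportional to $\omega^{1/(1+s)}$. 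This argument is valid for all $0\le s\le1$.

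The first inequality is the substantive one. Writing $R_a:=P(a)\rho_a$, so that $G=\sum_a R_a$, $\omega=\sum_a R_a^{1+s}$ and $\tau=\sum_a R_a^{1/(1-s)}$, I would first record the commutative mechanism, since it dictates the shape of the argument. When all $R_a$ and $G$ commute the claim factorizes over the eigenvalues of $G$ into the scalar bound $g^{-s}\sum_a x_a^{1+s}\le(\sum_a x_a^{1/(1-s)})^{1-s}$ with $x_a\ge0$ and $g=\sum_a x_a$; setting $L(p):=\log\sum_a x_a^{\,p}$, this is exactly the convexity estimate $L(1+s)\le sL(1)+(1-s)L(\tfrac1{1-s})$, using $L(1)=\log g$ and the identity $1+s=s\cdot1+(1-s)\cdot\tfrac1{1-s}$. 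In operator language the target is $D_{1+s}(\rho\,\|\,I_A\otimes\rho^E)\le \min_\sigma D_{1/(1-s)}(\rho\,\|\,I_A\otimes\sigma)$ for the Petz Rényi divergence, the right side being the Sibson form that produced $\tau$.

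The hard part is that the commutative proof normalizes separately inside each eigenspace of $G$, which non-commutativity forbids, and the naive lifts provably overshoot. Convexity of $\alpha\mapsto\log\Tr[\rho^\alpha(I_A\otimes G)^{1-\alpha}]$ with the fixed reference $G$ only yields $\Tr_E[\omega G^{-s}]\le(\Tr_E[\tau\,G^{-s/(1-s)}])^{1-s}$, whose right-hand side in general exceeds $\Tr_E\tau^{1-s}$; and replacing the central factor in $R_a^{1+s}=R_a^{s/2}R_aR_a^{s/2}$ by $\tau^{1-s}\ge R_a$ leaves a residual $\sum_a R_a^{s/2}G^{-s}R_a^{s/2}\ge I$ pointing the wrong way. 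I would therefore prove the operator inequality by complex interpolation (Stein--Hirschman / the Hadamard three-line theorem): construct a family $F(z)$ analytic on $0\le\Re z\le1$ whose two boundary lines reproduce $\Tr_E[\omega G^{-s}]$ and $\Tr_E\tau^{1-s}$, with the exponents $1+s$ and $1/(1-s)$ and the references $\rho^E$ and the Sibson-optimal $\sigma^\ast\propto\tau^{1-s}$ entering through affine functions of $z$, so that the three-line bound gives the claim; alternatively one may invoke the relevant monotonicity/comparison theorem for the Petz Rényi divergence. Designing the correct analytic family and controlling its boundary norms is the main obstacle. This inequality holds on $0\le s<1$, the range in which $\tau$, and hence $\phi(s)$, is defined.
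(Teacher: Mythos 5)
The paper never proves this lemma: it is imported verbatim as Lemma 11 of \cite{H-cq}, so there is no internal argument to measure you against, and your attempt must stand on its own. Your reduction of the two inequalities to the trace statements $\Tr_E[\omega G^{-s}]\le\Tr_E\tau^{1-s}$ and $\Tr_E[\omega G^{-s}]\ge(\Tr_E\omega^{1/(1+s)})^{1+s}$ is correct (the exponent bookkeeping $1/(1-\tfrac{s}{1+s})=1+s$ checks out, and $\Tr[\rho^{1+s}(I_A\otimes(\rho^E)^{-s})]=\Tr_E[\omega(\rho^E)^{-s}]$ is a valid partial-trace identity). Your proof of (\ref{8-26-8}) is sound: the reverse H\"{o}lder inequality for traces, $\Tr[XY]\ge\|X\|_p\|Y\|_q$ for $X\ge 0$, $Y>0$, $0<p<1$, $q=p/(p-1)<0$, does hold for non-commuting operators (it follows from the Schatten H\"{o}lder inequality combined with Araki--Lieb--Thirring), and with $p=1/(1+s)$, $Y=(\rho^E)^{-s}$ the factor $\|Y\|_q=(\Tr\rho^E)^{-s}=1$ as you claim. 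You should cite or prove the operator reverse H\"{o}lder rather than call it ``immediate,'' but that is presentational.

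The genuine gap is (\ref{ineq-7-23-2}). What you actually supply for it is the commutative computation (convexity of $p\mapsto\log\sum_a x_a^p$ applied per eigenvalue of $G$), an accurate diagnosis of why two natural non-commutative lifts fail (fixed-reference monotonicity in $\alpha$ of the Petz divergence lands on $(\Tr_E[\tau G^{-s/(1-s)}])^{1-s}$, which by your own reverse-H\"{o}lder estimate is $\ge\Tr_E\tau^{1-s}$ and hence too weak), and then the statement that one should build a Stein--Hirschman interpolation family --- immediately followed by the admission that ``designing the correct analytic family and controlling its boundary norms is the main obstacle.'' That obstacle is the entire content of the inequality in the quantum case: the left side uses the fixed reference $\rho^E$ at order $1+s$ while the right side is the Sibson-optimized quantity at order $1/(1-s)$, and you exhibit no candidate $F(z)$ whose two boundary lines produce these two different references, nor verify any boundary bounds. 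Note also that a term-by-term (in $a$) attack cannot work, since the per-$a$ inequality is already false classically, so any correct argument must exploit the sum over $a$. As written, half of the lemma is asserted rather than proved; you would need to carry out the interpolation construction in full or, as the paper does, simply cite \cite[Lemma 11]{H-cq}.
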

Then, the paper \cite{H-cq} showed that
\begin{align}
e_d(\rho|R) 
\ge & 
e_{\phi,q}(\rho^{A,E}|R) \\
e_I(\rho|R)
\ge &
e_{H,q}(\rho^{A,E}|R) \\
e_I(\rho|R)
\ge &
e_{\phi,q}(\rho^{A,E}|R),
\end{align}
where
\begin{align*}
e_{\phi,q}(\rho^{A,E}|R) 
:= &
\max_{0 \le s \le 1} 
-\frac{1+s}{2} \phi(\frac{s}{1+s}|\rho^{A,E})-\frac{s}{2}R \\
= & \max_{0 \le t \le \frac{1}{2}} 
-\frac{1}{2(1-t)}\phi(t|\rho^{A,E})-\frac{t}{2(1-t)}R \\
e_{H,q}(\rho^{A,E}|R) 
:=& \max_{0 \le s \le 1} \frac{s}{2-s} ( H_{1+s}(A|E|\rho^{A,E} ) -R). 
\end{align*}
As a relation, we obtain the following lemma.
\begin{lem}
\begin{align}
e_{H}(\rho|R)
\ge &
e_{H,q}(\rho^{A,E}|R) \Label{8-29-13}\\
e_{H}(\rho|R)
\ge &
e_{\phi,q}(\rho^{A,E}|R)\Label{8-29-14} \\
\frac{1}{2}e_{H}(\rho|R)
\le &
e_{\phi,q}(\rho^{A,E}|R) .\Label{8-29-15}
\end{align}
\end{lem}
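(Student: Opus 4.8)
The plan is to reduce all three bounds to Lemma~\ref{l2-b} together with two elementary devices: that each of the three exponents is nonnegative (the choice $s=0$ makes every integrand vanish, so the maximum is $\ge 0$), and that a monotone change of the optimization variable turns each $\phi$-expression into a multiple of $(H_{1+s}(A|E|\rho^{A,E})-R)$ whose coefficient I can compare against the one defining $e_H$. Throughout I write $h(s):=H_{1+s}(A|E|\rho^{A,E})$, and I use that $\rho=\rho^{A,E}$ so $e_H(\rho|R)=\max_{0\le s\le 1}s(h(s)-R)$.

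Inequality (\ref{8-29-13}) needs no $\phi$. Taking a maximizer $s^\ast\in[0,1]$ of $e_{H,q}$ I split on the sign of $h(s^\ast)-R$. If it is nonnegative, the scalar bound $\frac{s}{2-s}\le s$ (which holds exactly for $s\le 1$) gives $\frac{s^\ast}{2-s^\ast}(h(s^\ast)-R)\le s^\ast(h(s^\ast)-R)\le e_H(\rho|R)$; if it is negative, the left side is negative while $e_H(\rho|R)\ge 0$. Inequality (\ref{8-29-14}) is the same idea after eliminating $\phi$: applying the upper-bound half (\ref{ineq-7-23-2}) at parameter $t=\frac{s}{1+s}$ yields $-\phi(\frac{s}{1+s}|\rho^{A,E})\le\frac{s}{1+s}h(\frac{s}{1+s})$, so the $e_{\phi,q}$ integrand is at most $\frac{s}{2}h(\frac{s}{1+s})-\frac{s}{2}R$. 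Substituting $s''=\frac{s}{1+s}\in[0,\frac12]$ rewrites this as $\frac{s''}{2(1-s'')}(h(s'')-R)$, and the coefficient bound $\frac{s''}{2(1-s'')}\le s''$ (valid exactly for $s''\le\frac12$) together with the same sign split bounds every integrand by $e_H(\rho|R)$, hence bounds the maximum.

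For the reverse inequality (\ref{8-29-15}) I run the estimate downward using the lower-bound half (\ref{8-26-8}). Written with $s=\frac{t}{1-t}$ and cleared of the factor $1-t$, that inequality becomes $-\phi(t|\rho^{A,E})\ge t\,h(\frac{t}{1-t})$. Inserting it into the $t$-parametrized form $e_{\phi,q}(\rho^{A,E}|R)=\max_{0\le t\le 1/2}\bigl(-\frac{1}{2(1-t)}\phi(t|\rho^{A,E})-\frac{t}{2(1-t)}R\bigr)$ bounds each integrand below by $\frac{t}{2(1-t)}(h(\frac{t}{1-t})-R)$. Now the substitution $u=\frac{t}{1-t}$ is an increasing bijection of $[0,\frac12]$ onto $[0,1]$ under which $\frac{t}{2(1-t)}$ equals $\frac{u}{2}$ exactly, so the integrand is $\ge\frac12 u(h(u)-R)$; taking the maximum gives $e_{\phi,q}(\rho^{A,E}|R)\ge\frac12 e_H(\rho|R)$.

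The scalar coefficient inequalities and the sign alternative are routine. The point that needs care, and the main obstacle I anticipate, is the bookkeeping of the two substitutions: one must check that $s''=\frac{s}{1+s}$ and $u=\frac{t}{1-t}$ are genuine increasing bijections between $[0,1]$ and $[0,\frac12]$ so that ``bounded for every parameter'' legitimately upgrades to ``bounded for the maximum'' with no part of the admissible range dropped, and one must pair each direction with the correct half of Lemma~\ref{l2-b}: the upper bound (\ref{ineq-7-23-2}) drives (\ref{8-29-14}), while the lower bound (\ref{8-26-8}) drives (\ref{8-29-15}).
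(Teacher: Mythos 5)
Your proposal is correct and follows essentially the same route as the paper: both halves of Lemma~\ref{l2-b} paired with the matching direction, the change of variables between the $s$- and $t$-parametrizations of $e_{\phi,q}$, and the coefficient comparisons $\frac{s}{2-s}\le s$ and $\frac{t}{2(1-t)}\le t$. Your explicit sign split via the nonnegativity of the exponents is a small point of extra care that the paper glosses over (its intermediate ``$=$'' between the two $t$-maxima is really only the needed ``$\le$''), but the substance is identical.
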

Hence, our lower bound $e_{H}(\rho|R)$ for $e_I(\rho|R)$
is better than those given in \cite{H-cq}.
However,
our lower bound $\frac{1}{2}e_{H}(\rho|R)$ for $e_d(\rho|R)$
is not as good as that given in \cite{H-cq}.
This fact implies that
our method is better under the modified mutual information criterion
than the smoothing method used in \cite{H-cq}.

\begin{proof}
\begin{align*}
& e_H(\rho|R)
=\max_{0 \le s \le 1}  s (H_{1+s}(A|E|\rho )-R) \\
\ge & \max_{0 \le s \le 1}  \frac{s}{2-s} (H_{1+s}(A|E|\rho )-R) 
= e_{H,q}(\rho|R),
\end{align*}
which implies (\ref{8-29-13}).
Further, 
(\ref{ineq-7-23-2}) yields that
\begin{align*}
&e_{\phi,q}(\rho^{A,E}|R) 
= \max_{0 \le t \le \frac{1}{2}} 
-\frac{1}{2(1-t)}\phi(t|\rho^{A,E})-\frac{t}{2(1-t)}R \\
\le & \max_{0 \le t \le \frac{1}{2}} 
\frac{t}{2(1-t)}H_{1+t}(A|E|\rho ) -\frac{t}{2(1-t)}R \\
= & \max_{0 \le t \le \frac{1}{2}} 
\frac{t}{2(1-t)} (H_{1+t}(A|E|\rho ) -R) \\
= & \max_{0 \le t \le \frac{1}{2}} 
t (H_{1+t}(A|E|\rho ) -R) \\
\le & \max_{0 \le t \le 1} 
t (H_{1+t}(A|E|\rho ) -R) 
= e_H(\rho|R),
\end{align*}
which implies (\ref{8-29-14}).

Finally, 
(\ref{8-26-8})
yields that
\begin{align}
& \frac{1}{2}e_{H}(\rho^{A,E}|R)
=
\max_{0 \le s \le 1} \frac{s}{2} H_{1+s}(A|E|\rho^{A,E} ) -\frac{s}{2}  R 
\nonumber \\
\le &
\max_{0 \le s \le 1} - \frac{1+s}{2} \phi(\frac{s}{1+s}|A|E|\rho^{A,E} ) 
-\frac{s}{2}  R 
= e_{\phi,q}(\rho^{A,E}|R),\nonumber
\end{align}
which implies (\ref{8-29-15}).
\end{proof}

\section{Equivocation rate}\Label{s4}

Next, we consider the case when $\log M$ is larger than $H(A|E)$.

\begin{thm}\Label{th2}
When the ensemble of the functions $\{f_{\bX}\}$ is 
universal$_2$, it satisfies 
\begin{align}
\rE_\bX 
e^{s \underline{I}'(f_{\bX}(A):E|\rho) }
&\le
1+ M^s e^{- s \overline{H}_{1+s}^*(A|E|\rho)} \nonumber \\
&=
1+ e^{s (\log M- \overline{H}^*_{1+s}(A|E|\rho ))}.
\Label{8-14-3}
\end{align}
\end{thm}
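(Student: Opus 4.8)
The plan is to reduce the statement to a bound on a single linear functional of the post-hashing state, so that the expectation over $\bX$ can be taken inside the trace before any exponentiation. First I would describe the state after hashing: for a fixed $f_\bX$ the cq-state becomes $\rho'=\sum_{i=1}^M |i\rangle\langle i|\otimes\rho_i$ with $\rho_i:=\sum_{a:f_\bX(a)=i}P(a)\rho_a$, whose $E$-marginal is unchanged, $\rho'^E=\rho^E$. Applying the identity (\ref{8-15-17}) to $\rho'$ gives $\underline{I}'(f_\bX(A):E|\rho)=\log M-\overline{H}(f_\bX(A)|E|\rho)$, hence $e^{s\underline{I}'(f_\bX(A):E|\rho)}=M^{s}e^{-s\overline{H}(f_\bX(A)|E|\rho)}$.

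The key reduction is the monotonicity (\ref{8-15-14}), applied to the hashed state $\rho'$: for $s\in(0,1]$ one has $\overline{H}(f_\bX(A)|E|\rho)\ge\overline{H}^*_{1+s}(f_\bX(A)|E|\rho)$, so that, using $\rho'^E=\rho^E$ in the definition of $\overline{H}^*_{1+s}$, $e^{s\underline{I}'(f_\bX(A):E|\rho)}\le M^{s}e^{-s\overline{H}^*_{1+s}(f_\bX(A)|E|\rho)}=M^{s}\Tr[\rho' G'^{s}]$, where $G':=(I\otimes(\rho^E)^{-1/2})\rho'(I\otimes(\rho^E)^{-1/2})=\sum_i|i\rangle\langle i|\otimes\tilde\rho_i$ and $\tilde\rho_i:=(\rho^E)^{-1/2}\rho_i(\rho^E)^{-1/2}$. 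The point of this step is that the right-hand side is \emph{linear} in $\rho'$ and $G'^{s}$ rather than an exponential of a conditional entropy; since $\rE_\bX$ is an average over functions, I can now pass the expectation inside the trace and reduce the theorem to the single estimate $\rE_\bX\Tr[\rho' G'^{s}]\le M^{-s}+e^{-s\overline{H}^*_{1+s}(A|E|\rho)}$, whereupon multiplying by $M^s$ yields (\ref{8-14-3}).

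For the core estimate I would write $\Tr[\rho' G'^{s}]=\sum_i\Tr[\rho_i\tilde\rho_i^{s}]$ and isolate the contribution of colliding pairs. Setting $\hat\rho_a:=P(a)(\rho^E)^{-1/2}\rho_a(\rho^E)^{-1/2}$, so that $G:=(I\otimes(\rho^E)^{-1/2})\rho(I\otimes(\rho^E)^{-1/2})=\sum_a|a\rangle\langle a|\otimes\hat\rho_a$, one has $\tilde\rho_i=\sum_{a:f_\bX(a)=i}\hat\rho_a$ and $\rho_i=\sum_{b:f_\bX(b)=i}P(b)\rho_b$. The terms in which a single index is isolated reassemble, after summing over $i$, into $\sum_a P(a)^{1+s}\Tr[\rho_a((\rho^E)^{-1/2}\rho_a(\rho^E)^{-1/2})^{s}]=\Tr[\rho G^{s}]=e^{-s\overline{H}^*_{1+s}(A|E|\rho)}$, while the cross terms involve distinct $a\neq b$ with $f_\bX(a)=f_\bX(b)$. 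Taking $\rE_\bX$ and invoking Condition \ref{C1}, each such collision carries weight at most $1/M$; extending the cross sum to all pairs and using $\sum_a P(a)\rho_a=\rho^E$ collapses it to at most $\tfrac1M\Tr\rho^E=\tfrac1M\le M^{-s}$ (valid since $0<s\le1$). This argument is exact and transparent at $s=1$, where $\tilde\rho_i^{s}=\tilde\rho_i$ and the diagonal/cross split is literal.

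The main obstacle is the regime $0<s<1$, where $\tilde\rho_i^{s}=\big(\sum_{a:f_\bX(a)=i}\hat\rho_a\big)^{s}$ does not expand additively and the decomposition above is only formal. I expect this to be the technical heart of the proof: to make the split rigorous while keeping the cross contribution bounded by $M^{-s}$, one must treat the fractional power with an operator-concavity tool, e.g.\ the integral representation $\tilde\rho_i^{s}=\frac{\sin\pi s}{\pi}\int_0^\infty\tilde\rho_i(\tilde\rho_i+tI)^{-1}t^{s-1}\,dt$ together with resolvent estimates and the bound $\tilde\rho_i\le I$ (which follows from $\rho_i\le\rho^E$), or equivalently a Jensen-type operator inequality for $x\mapsto x^{s}$ under the compression $G'=(W_\bX\otimes I)G(W_\bX\otimes I)^{\dagger}$ induced by $W_\bX|a\rangle=|f_\bX(a)\rangle$. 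The naive subadditivity $\big(\sum_a\hat\rho_a\big)^{s}\le\sum_a\hat\rho_a^{s}$ is too lossy here, as it produces a cross term of order $|\cA|^{1-s}/M$ rather than $M^{-s}$, so the argument must retain the full operator $\tilde\rho_i$ inside the power.
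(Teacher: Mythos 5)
Your opening reduction is exactly the paper's: combining (\ref{8-15-17}) with the monotonicity (\ref{8-15-14}) applied to the hashed state, and using ${\rho'}^E=\rho^E$, the theorem becomes the single estimate $\rE_\bX\Tr[\rho' G'^{s}]\le M^{-s}+e^{-s\overline{H}^*_{1+s}(A|E|\rho)}$. But you then stop short of proving that estimate, which is the entire content of the theorem. Your diagonal/cross split of $\tilde\rho_i^{s}$ is, as you say yourself, only literal at $s=1$; for $0<s<1$ you list candidate tools (an integral representation of $x^{s}$, resolvent estimates, a Jensen-type inequality under the compression induced by $f_\bX$) without carrying any of them out, and it is not clear that the route you sketch --- expanding the fractional power of $\tilde\rho_i$ into one-body and collision pieces --- can be made rigorous at all. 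The claim that one ``can now pass the expectation inside the trace'' is also not justified as stated, since $G'^{s}$ depends nonlinearly on $\bX$. This is a genuine gap.

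The paper closes it with a short argument that never expands $\tilde\rho_i^{s}$. Rewrite $\sum_i\Tr[\rho_i\tilde\rho_i^{s}]$ as $\sum_a P(a)\Tr \rho_a \bigl(M(\rho^E)^{-1/2}(\sum_{a':f_{\bX}(a')=f_{\bX}(a)}P(a')\rho_{a'})(\rho^E)^{-1/2}\bigr)^{s}$ (here the factor $M^{s}$ is kept inside), so that the prefactor $P(a)\rho_a$ is independent of $\bX$ and all $\bX$-dependence sits inside the $s$-th power. Then: (i) operator concavity of $x\mapsto x^{s}$ for $s\in(0,1]$ moves $\rE_\bX$ inside the power --- this is the precise sense in which the expectation ``passes inside''; (ii) Condition \ref{C1} gives $\rE_\bX\sum_{a'\neq a, f_{\bX}(a')=f_{\bX}(a)}P(a')\rho_{a'}\le\tfrac{1}{M}\rho^E$, and operator monotonicity of $x\mapsto x^{s}$ turns the averaged argument into at most $I+MP(a)(\rho^E)^{-1/2}\rho_a(\rho^E)^{-1/2}$; (iii) the elementary commuting-case inequality $(I+X)^{s}\le I+X^{s}$ (Lemma \ref{L9}) then yields exactly your two target terms, the $I$ part summing to $1$ and the $X^{s}$ part summing to $M^{s}e^{-s\overline{H}^*_{1+s}(A|E|\rho)}$. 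Your instinct that naive subadditivity is too lossy is correct, but the fix is to bound the whole collision sum by $\tfrac{1}{M}\rho^E$ \emph{inside} the power before any power is taken, reducing the problem to $(I+X)^{s}$ with $I$ and $X$ commuting; no integral representations or resolvent estimates are needed.
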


Using (\ref{8-14-3}) and the concavity of $x \mapsto \log x$ , we obtain 
\begin{align*}
& s \rE_\bX 
\underline{I}'(f_{\bX}(A):E|\rho) 
\le 
\log \rE_\bX 
e^{s \underline{I}'(f_{\bX}(A):E|\rho) } \nonumber \\
\le &
\log (1+ e^{s (\log M- \overline{H}^*_{1+s}(A|E|\rho ))})
\le
e^{s (\log M- \overline{H}^*_{1+s}(A|E|\rho ))},
\end{align*}
which can be regarded as another version of (\ref{8-14-2}).

Hence, (\ref{8-15-8}), (\ref{8-14-3}), and (\ref{8-15-12}) guarantee that
\begin{align*}
& \rE_\bX 
e^{s I'(f_{\bX}(A):E|\rho) }
\le
v^s \rE_\bX 
e^{s I'(f_{\bX}(A):E|{\cal E}_{\rho^E}(\rho )) } \nonumber \\
\le &
v^s (1+ M^s e^{- s\overline{H}^*_{1+s}(A|E|{\cal E}_{\rho^E}(\rho ) )}) \nonumber \\
= &
v^s (1+ M^s e^{- s{H}_{1+s}(A|E|{\cal E}_{\rho^E}(\rho ) )}) \nonumber \\
\le &
v^s (1+ M^s e^{- s{H}_{1+s}(A|E|\rho)}) 
=
v^s (1+ e^{s (\log M- H_{1+s}(A|E|\rho ))} ),
\end{align*}
where $v$ is the number of eigenvalues of $\rho^E$.
Since
\begin{align*}
& \log v^s (1+ e^{s (\log M- H_{1+s}(A|E|\rho ))} ) \nonumber\\
=&
s \log v+\log  (1+ e^{s (\log M- H_{1+s}(A|E|\rho ))} ) \nonumber\\
\le &
s \log v+\log  2+ \log \max \{1, e^{s (\log M- H_{1+s}(A|E|\rho ))} \} \nonumber \\
=&
s \log v+\log  2+ \max \{0, s (\log M- H_{1+s}(A|E|\rho )) \},
\end{align*}
using (\ref{1-28-1}), we obtain the following theorem:
\begin{thm}
There exists a function $f: {\cal A} \mapsto \{1, \ldots, M\}$ such that
\begin{align*}
& \log M - H(f(A)|E|\rho )
=I'(f(A):E|\rho ) \nonumber \\
\le &
\log v+\frac{\log  2}{s}+ \max\{0, \log M- H_{1+s}(A|E|\rho ) \}.
\end{align*}
for $s\in (0,1]$.
\end{thm}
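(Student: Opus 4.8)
The plan is to obtain this statement as a corollary of Theorem~\ref{th2} by routing the argument through the pinched state ${\cal E}_{\rho^E}(\rho)$. The obstacle is that Theorem~\ref{th2} is phrased with the ``barred/underlined'' quantities $\underline{I}'$ and $\overline{H}^*_{1+s}$, whereas the target inequality involves the plain $I'$ and $H_{1+s}$; the pinching map is precisely the device that reconciles the two. First I would observe that on the pinched state $\sigma={\cal E}_{\rho^E}(\rho)$ the density commutes with $I_A\otimes\rho^E$ (hence with $\rho_{\mix}^A\otimes\rho^E$, since $\rho_{\mix}^A\propto I_A$), and that $\sigma^E=\rho^E$. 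For such commuting configurations $\underline{D}$ and $D$ coincide, so $\underline{I}'(f_{\bX}(A):E|\sigma)=I'(f_{\bX}(A):E|\sigma)$ and likewise $\overline{H}^*_{1+s}(A|E|\sigma)=H_{1+s}(A|E|\sigma)$.

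Next I would transport the bound back to $\rho$. Applying the pinching inequality~(\ref{8-15-8}) in its modified-mutual-information form gives $e^{sI'(f_{\bX}(A):E|\rho)}\le v^s e^{sI'(f_{\bX}(A):E|{\cal E}_{\rho^E}(\rho))}$, so averaging over $\bX$ and feeding in Theorem~\ref{th2} for the pinched state (where the starred quantity has collapsed to $H_{1+s}$) yields
\begin{align*}
\rE_\bX e^{sI'(f_{\bX}(A):E|\rho)}
\le v^s\bigl(1+M^s e^{-sH_{1+s}(A|E|{\cal E}_{\rho^E}(\rho))}\bigr).
\end{align*}
The monotonicity~(\ref{8-15-12}) of $H_{1+s}(A|E|\cdot)$ under the quantum operation ${\cal E}_{\rho^E}$ then lets me replace $H_{1+s}(A|E|{\cal E}_{\rho^E}(\rho))$ by $H_{1+s}(A|E|\rho)$, producing the clean bound $v^s\bigl(1+e^{s(\log M-H_{1+s}(A|E|\rho))}\bigr)$.

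From here it is routine bookkeeping. Since the minimum over the ensemble is at most the average, there is a deterministic $f$ with $e^{sI'(f(A):E|\rho)}$ no larger than the right-hand side. Taking logarithms, using $s\log v$ for the prefactor and the elementary estimate $\log(1+x)\le\log2+\max\{0,\log x\}$ (valid since $1+x\le2\max\{1,x\}$), and dividing by $s$, I obtain
\begin{align*}
I'(f(A):E|\rho)\le\log v+\frac{\log2}{s}+\max\{0,\log M-H_{1+s}(A|E|\rho)\}.
\end{align*}
Finally the identity~(\ref{1-28-1}), applied to $f(A)$ with classical alphabet size $M$, rewrites $I'(f(A):E|\rho)$ as $\log M-H(f(A)|E|\rho)$, which is exactly the asserted equality and bound.

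I expect the only genuinely delicate point to be the first step: confirming that passing to ${\cal E}_{\rho^E}(\rho)$ truly identifies $\underline{I}'$ with $I'$ and $\overline{H}^*_{1+s}$ with $H_{1+s}$, and checking that the same pinching (by $\rho^E$, equivalently by $\rho_{\mix}^A\otimes\rho^E$) produces the same eigenvalue count $v$ so that the factor $v^s$ introduced through the exponentials is tracked consistently and not double-counted against the entropies.
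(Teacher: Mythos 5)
Your proposal is correct and follows essentially the same route as the paper: pinch by $\rho^E$ so that $\underline{I}'$ and $\overline{H}^*_{1+s}$ collapse to $I'$ and $H_{1+s}$, pay the $v^s$ factor via (\ref{8-15-8}), invoke Theorem~\ref{th2} on the pinched state, undo the pinch in the entropy via (\ref{8-15-12}), and finish with min-vs-average, $\log(1+x)\le\log 2+\max\{0,\log x\}$, and the identity (\ref{1-28-1}). The "delicate point" you flag is indeed harmless: $\rho_{\mix}^A\otimes\rho^E$ has the same number of distinct eigenvalues as $\rho^E$, so the single factor $v^s$ is all that enters.
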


Next, we consider the case when our state
is given by the $n$-fold independent and identical state 
$\rho$, i.e., $\rho^{\otimes n}$.
Then, we define the {\it equivocation rate}
as the maximum Eve's ambiguity rate for the given key generation rate $R$:
\begin{align*}
 {\cal R}(R| \rho) 
:=
\sup_{\{f_n \}}
\{
\lim_{n \to \infty} \frac{H(f_{n}(A)|E|\rho^{\otimes n})}{n} 
|
\lim_{n\to\infty} \frac{H(f_n(A)|\rho^{\otimes n} )}{nR}=1
\},
\end{align*}
where the supremum takes the map $f_n$ that maps from ${\cal A}^n$ to $\{1,\ldots, e^{nR} \}$.
Then, we obtain the following theorem.
\begin{thm}
When the key generation rate $R$ is greater than 
$H(A|E|\rho)$,
\begin{align}
 {\cal R}(R| \rho) =H(A|E|\rho).\Label{1-28-2}
\end{align}
\end{thm}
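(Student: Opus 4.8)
The statement asserts $\mathcal R(R|\rho)=H(A|E|\rho)$ for $R>H(A|E|\rho)$, so the plan is to prove the two matching bounds. The upper bound (converse) is immediate from the monotonicity of the conditional entropy under the classical processing $A\mapsto f_n(A)$, i.e. (\ref{8-14-1}): every $f_n$ satisfies $H(f_n(A)|E|\rho^{\otimes n})\le H(A|E|\rho^{\otimes n})=nH(A|E|\rho)$, so $\lim_n \tfrac1n H(f_n(A)|E|\rho^{\otimes n})\le H(A|E|\rho)$ for every admissible sequence. This is exactly (\ref{8-14-5}), and taking the supremum gives $\mathcal R(R|\rho)\le H(A|E|\rho)$.

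For the lower bound I would exhibit, for each $s\in(0,1]$, an admissible sequence whose equivocation rate is at least $H_{1+s}(A|E|\rho)$. Apply the Theorem preceding the definition of $\mathcal R$ to $\rho^{\otimes n}$ with $M=M_n=e^{nR}$. Since $R>H(A|E|\rho)\ge H_{1+s}(A|E|\rho)$ by (\ref{8-15-14}) and monotonicity, the maximum in that bound equals $n(R-H_{1+s}(A|E|\rho))$; using $H_{1+s}(A|E|\rho^{\otimes n})=nH_{1+s}(A|E|\rho)$, that $v_n$ grows only polynomially, and $I'(f_n(A):E|\rho^{\otimes n})=nR-H(f_n(A)|E|\rho^{\otimes n})$ from (\ref{1-28-1}),
\begin{align*}
\frac{H(f_n(A)|E|\rho^{\otimes n})}{n}
=R-\frac{I'(f_n(A):E|\rho^{\otimes n})}{n}
\ge H_{1+s}(A|E|\rho)-\frac{\log v_n}{n}-\frac{\log 2}{sn}.
\end{align*}
Letting $n\to\infty$ yields $\liminf_n \tfrac1n H(f_n(A)|E|\rho^{\otimes n})\ge H_{1+s}(A|E|\rho)$, hence (passing to a convergent subsequence if necessary) an admissible sequence whose limit is $\ge H_{1+s}(A|E|\rho)$. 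Since this holds for every $s\in(0,1]$ and $H_{1+s}(A|E|\rho)\uparrow H(A|E|\rho)$ as $s\downarrow0$, the supremum defining $\mathcal R$ is at least $H(A|E|\rho)$, which together with the converse gives (\ref{1-28-2}).

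The genuine obstacle is that, unlike in Section \ref{s3}, here $I'(f_n(A):E|\rho^{\otimes n})$ does \emph{not} vanish, so the $f_n$ above is not automatically admissible: one must still verify the normalisation $\lim_n H(f_n(A)|\rho^{\otimes n})/(nR)=1$, i.e. $D(\rho^{f_n(A)}\|\rho^{f_n(A)}_{\mix})=o(n)$. This is a near-uniformity (leftover-hash) property of the \emph{same} universal$_2$ family, independent of Eve, and I would establish it by applying Theorem \ref{th1} with the trivial system in place of ${\cal H}_E$ (so $v=1$), which gives $\rE_{\bX} D(\rho^{f_{\bX}(A)}\|\rho^{f_{\bX}(A)}_{\mix})=\rE_{\bX}I'(f_{\bX}(A):\emptyset|\rho^{\otimes n})\le \tfrac{1}{s'}e^{s'(nR-nH_{1+s'}(A|\rho^A))}$, where $H_{1+s'}(A|\rho^A)=\tfrac{-1}{s'}\log\Tr(\rho^A)^{1+s'}$ is the source Rényi entropy. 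For any $R<H(A|\rho^A)$ one can pick $s'\in(0,1]$ with $R<H_{1+s'}(A|\rho^A)$ (since $H_{1+s'}(A|\rho^A)\uparrow H(A|\rho^A)$ as $s'\downarrow0$), whence this expectation tends to $0$. A Markov/union-bound selection over the seed $\bX$ then produces, for each $n$, a single realisation $f_n$ simultaneously meeting the privacy estimate of the previous paragraph and the normalisation $D=o(n)$, so the sequence is admissible and the lower-bound argument applies. The main work is therefore this separate control of the marginal's non-uniformity and the bookkeeping that both good events occur for one hash; the equivocation estimate itself is a direct corollary of the preceding theorem together with the $s\downarrow0$ limit.
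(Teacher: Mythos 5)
Your core argument is exactly the paper's: the converse is (\ref{8-14-5}) (monotonicity under the classical map $A\mapsto f_n(A)$), and the direct part applies the theorem preceding the definition of ${\cal R}$ to $\rho^{\otimes n}$ with $M_n=e^{nR}$, uses $I'(f_n(A):E|\rho^{\otimes n})=nR-H(f_n(A)|E|\rho^{\otimes n})$ and the polynomial growth of $v_n$, and then lets $s\to 0$ (one small simplification you could add: since the converse bounds the $\limsup$ by $H(A|E|\rho)$ and your estimate bounds the $\liminf$ from below, the limit exists and no subsequence extraction is needed).

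Where you depart from the paper is the admissibility check, and you have put your finger on a real gap: the paper's proof never verifies the side condition $\lim_n H(f_n(A)|\rho^{\otimes n})/(nR)=1$ for the constructed sequence, and it is indeed not automatic, since the available bound only gives $nR-H(f_n(A)|\rho^{\otimes n})=D(\rho^{f_n(A)}\|\rho^{f_n(A)}_{\mix})\le I'(f_n(A):E|\rho^{\otimes n})\approx n(R-H(A|E|\rho))$, i.e.\ $H(f_n(A)|\rho^{\otimes n})\gtrsim nH(A|E|\rho)$, not $\sim nR$. Your repair (Theorem \ref{th1} with trivial ${\cal H}_E$ to control the marginal, then selecting one seed good for both estimates) is the right idea, but the selection step as sketched does not quite work: if you take a symmetric Markov bound, demanding $I'\le c\,\rE_{\bX}I'$ for a constant $c>1$, you only get $H(f_n(A)|E|\rho^{\otimes n})\ge n\bigl(cH_{1+s}(A|E|\rho)-(c-1)R\bigr)$, which is strictly below $nH_{1+s}(A|E|\rho)$ whenever $R>H(A|E|\rho)$ and so destroys the equivocation estimate. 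You need an asymmetric selection with the $I'$-threshold tending to $1$ (e.g.\ bound $\rE_{\bX}[I'+\lambda_n D]$ for a suitably growing $\lambda_n$ and use nonnegativity of both terms). Finally, your marginal bound requires $R<H(A|\rho^A)$ (with some extra care at $R=H(A|\rho^A)$ via $s'\to0$); for $R>H(A|\rho^A)$ no admissible sequence exists and the supremum is over the empty set, an edge case neither you nor the paper addresses.
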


Indeed, 
using the above theorem, we can calculate
the minimum information rate for the given key generation rate $R$
as follows.
\begin{align*}
&
\inf_{\{f_n \}}
\{
\lim_{n \to \infty} \frac{I(E:f_{n}(A)|\rho^{\otimes n})}{n} 
|
\lim_{n\to\infty} \frac{H(f_n(A)|\rho^{\otimes n} )}{nR}=1
\} \\
=& \max\{R- H(A|E|\rho),0\}.
\end{align*}

\begin{proof}
When the key generation rate $R$, i.e., $M_n=e^{nR}$,
there exists a sequence of functions $f_n: {\cal A}^n \mapsto \{1, \ldots, M_n\}$ such that
\begin{align*}
R- \lim_{n \to \infty} \frac{H(E|f_{n}(A)|\rho^{\otimes n})}{n} 
\le 
\max\{0, R- H_{1+s}(A|E|\rho ) \}
\end{align*}
for $s\in (0,1]$.
Then, taking the limit $s \to 0$,
we obtain
\begin{align*}
R- \lim_{n \to \infty} \frac{H(E|f_{n}(A)|\rho^{\otimes n})}{n} 
\le 
\max\{0, R- H(A|E|\rho ) \},
\end{align*}
which implies the part $\le$ of (\ref{1-28-2}).
Converse inequality $\ge$ follows from (\ref{8-14-5}).
\end{proof}

\section{Conclusion}
We have derived an upper bound of 
information leaked to quantum attacker 
in the modified quantum mutual information criterion
when we apply universal$_2$ hash functions.
In the commutative case, our lower bound coincides with 
the bound given in \cite{H-leaked}.
In the non-commutative case, 
our bound is different from Renner\cite{Ren05}'s two universal hashing lemma
even in $s=1$
because Renner\cite{Ren05}'s result is based on 
$\overline{H}_2^*(A|E|\rho)$ but ours is based on $H_{1+s}(A|E|\rho)$. 

Applying our bound to the i.i.d. case,
we obtain a lower bound for
the exponential decreasing rate for information leaked to quantum attacker
under the modified mutual information criterion.
Our lower bound is better than
lower bounds derived by the smoothing method in \cite{H-cq}.

Further, we have derived the asymptotic equivocation rate.
In oder to show it, 
we have derived a quantum version of 
privacy amplification theorems,
whose classical version is given in \cite{H-leaked,network}.
In this quantum version, we employ 
$\overline{H}_{1+s}^*(A|E|\rho)$ instead of $H_{1+s}(A|E|\rho)$. 
In the second step for the derivation,
we employ $H_{1+s}(A|E|\rho)$. 
Then, the asymptotic equivocation rate
can be characterized by $H(A|E|\rho)$,
which is given by the limit $\lim_{s \to 0} H_{1+s}(A|E|\rho)$. 

\section*{Acknowledgments}
The author 
is partially supported by a MEXT Grant-in-Aid for Young Scientists (A) No. 20686026 and Grant-in-Aid for Scientific Research (A) No. 23246071.
He is partially supported by the National Institute of Information and Communication Technolgy (NICT), Japan.
The Centre for Quantum Technologies is funded by the
Singapore Ministry of Education and the National Research Foundation
as part of the Research Centres of Excellence programme.

\appendix

\section{Proof of Theorem \ref{th1}}\Label{a1}
In order to show Theorem \ref{th1}, 
we prepare the following two lemmas.

\begin{lem}\Label{L9}
The matrix inequality $(I+X)^s \le I+X^s$ holds 
with a non-negative matrix $X$ and $s \in (0,1]$.
\end{lem}

\begin{proof}
Since $I$ is commutative with $X$,
it is sufficient to show that
$(1+x)^s \le 1+x^s$ for $x\ge 0$.
This inequality is trivial.
\end{proof}

\begin{lem}\Label{L10}
The matrix inequality $ \log (I+X) \le \frac{1}{s}X^s$ holds 
with a non-negative matrix $X$ and $s \in (0,1]$.
\end{lem}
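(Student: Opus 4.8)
The plan is to reduce the asserted matrix inequality to a scalar one. Since both $\log(I+X)$ and $\frac{1}{s}X^s$ are obtained from the single non-negative matrix $X$ by functional calculus, they are diagonal in any orthonormal basis of eigenvectors of $X$, so no non-commutativity enters. Consequently the matrix inequality $\log(I+X)\le \frac{1}{s}X^s$ holds if and only if the corresponding inequality holds for every eigenvalue of $X$, that is, if and only if
\begin{align*}
\log(1+x)\le \frac{1}{s}x^s \quad \text{for all } x\ge 0 \text{ and } s\in(0,1].
\end{align*}

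To establish this scalar inequality I would build directly on Lemma \ref{L9}. Writing $\log(1+x)=\frac{1}{s}\log\bigl((1+x)^s\bigr)$ and invoking the scalar form of Lemma \ref{L9}, namely $(1+x)^s\le 1+x^s$, together with the monotonicity of $\log$, gives
\begin{align*}
\log(1+x)=\frac{1}{s}\log\bigl((1+x)^s\bigr)\le \frac{1}{s}\log(1+x^s).
\end{align*}
Applying the elementary bound $\log(1+y)\le y$ with $y=x^s\ge 0$ then yields $\frac{1}{s}\log(1+x^s)\le \frac{1}{s}x^s$, which chains with the previous display to give the claim.

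The argument has no genuine obstacle: the only point requiring care is the reduction itself, where one must observe that because $\log(I+X)$ and $X^s$ are both functions of the \emph{same} matrix $X$, the matrix inequality is equivalent to the pointwise scalar inequality on the spectrum of $X$, so the usual difficulty of comparing non-commuting matrices does not arise. An alternative route, should one prefer not to cite Lemma \ref{L9}, is to set $f(x)=\frac{1}{s}x^s-\log(1+x)$, note $f(0)=0$, and verify $f'(x)=x^{s-1}-\frac{1}{1+x}\ge 0$ for $x>0$ by checking $x^{1-s}\le 1+x$ separately on $x\le 1$ and $x\ge 1$; monotonicity of $f$ then gives $f\ge 0$.
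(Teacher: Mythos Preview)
Your proof is correct and matches the paper's own argument essentially step for step: the paper also reduces to the scalar case (noting $I$ commutes with $X$) and then chains $\log(1+x)=\frac{1}{s}\log(1+x)^s\le \frac{1}{s}\log(1+x^s)\le \frac{x^s}{s}$ using $(1+x)^s\le 1+x^s$ and $\log(1+y)\le y$. Your alternative calculus route is a nice bonus but not needed here.
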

\begin{proof}
Since $I$ is commutative with $X$,
it is sufficient to show that
$\log (1+x) \le \frac{x^s}{s}$ for $x\ge 0$.
Since the inequalities $(1+x)^s \le 1+ x^s$ 
and $\log (1+x) \le x$
hold for $x \ge 0$ and $0 < s \le 1$,
the inequalities
\begin{align}
\log (1+x) = \frac{\log (1+x)^s}{s}\le
\frac{\log (1+x^s)}{s}\le 
\frac{x^s}{s}\Label{4-27-14}
\end{align}
hold.
\end{proof}

Now, we prove Theorem \ref{th1}.
\begin{align}
& \rE_\bX I'(f_{\bX}(A):E|\rho) \nonumber \\
= &
\rE_\bX 
D( \sum_{i=1}^M  |i\rangle \langle i|  \otimes  \sum_{a:f_{\bX}(a)=i } P(a) \rho_{a}\| \frac{1}{M} I \otimes \rho^E) \nonumber \\
= &
\rE_\bX 
\sum_a \Tr P(a) \rho_{a} (\log (\sum_{a':f_{\bX}(a')=f_{\bX}(a)} P(a') \rho_{a'} ) -\log \frac{1}{M}\rho^E ) \nonumber \\
\le &
\sum_a P(a) \Tr \rho_{a} (\log (\rE_\bX \sum_{a':f_{\bX}(a')=f_{\bX}(a)} P(a') \rho_{a'} ) -\log \frac{1}{M}\rho^E ) 
\Label{8-15-a}
\\
= &
\sum_a P(a) \Tr \rho_{a} (\log (P(a) \rho_{a} \nonumber \\
& + \rE_\bX \sum_{a':f_{\bX}(a')=f_{\bX}(a),a'\neq a} P(a') \rho_{a'} ) -\log \frac{1}{M}\rho^E ) \nonumber \\
\le &
\sum_a P(a) \Tr \rho_{a} (\log (P(a) \rho_{a}\nonumber \\
& + \frac{1}{M}\sum_{a': a'\neq a} P(a') \rho_{a'} ) -\log \frac{1}{M}\rho^E ) 
\Label{8-15-b}
\\
\le &
\sum_a P(a) \Tr \rho_{a} (\log (P(a) \rho_{a}+ \frac{1}{M}\rho^E ) -\log \frac{1}{M}\rho^E ) \nonumber \\
\le &
\sum_a P(a) \Tr \rho_{a} (\log (v P(a) {\cal E}_{\rho^E}(\rho_{a})+ \frac{1}{M}\rho^E ) -\log \frac{1}{M}\rho^E ) 
\Label{8-15-c}
\\
= &
\sum_a P(a) \Tr \rho_{a} \log ( v M P(a) {\cal E}_{\rho^E}(\rho_{a})(\rho^E)^{-1}+ I) \nonumber ,
\end{align}
where
(\ref{8-15-a}) follows from the matrix convexity of $x \mapsto \log x$,
(\ref{8-15-b}) follows from Condition \ref{C1} and the matrix monotonicity of $x \mapsto \log x$,
and 
(\ref{8-15-c}) follows from 
(\ref{8-15-23}) and the matrix monotonicity of $x \mapsto \log x$.

Using Lemma \ref{L10}, we obtain
\begin{align}
& \sum_a P(a) \Tr \rho_{a} \log ( v M P(a) {\cal E}_{\rho^E}(\rho_{a})(\rho^E)^{-1}+ I) \nonumber \\
\le &
\frac{1}{s} \sum_a P(a) \Tr \rho_{a} ( v M P(a) {\cal E}_{\rho^E}(\rho_{a})(\rho^E)^{-1})^s \nonumber \\
= &
\frac{v^s M^s}{s} \sum_a P(a)^{1+s} \Tr {\cal E}_{\rho^E}(\rho_{a})^{1+s}(\rho^E)^{-s} \nonumber \\
= &
\frac{v^s M^s}{s} e^{s H_{1+s}(A|E| {\cal E}_{I\otimes \rho^E}(\rho ))}
\le 
\frac{v^s M^s}{s} e^{s H_{1+s}(A|E| \rho )},\Label{8-15-11}
\end{align}
where (\ref{8-15-11}) follows from (\ref{8-15-12}).

\section{Proof of Theorem \ref{th2}}\Label{a2}
The relations (\ref{8-15-14}) and (\ref{8-15-17})
imply
\begin{align*}
s \underline{I}'(A:E|\rho) 
\le 
\log \sum_a P(a) \Tr \rho_a (|{\cal A}| P(a) (\rho^E)^{-1/2} \rho_a (\rho^E)^{-1/2})^{s} .
\end{align*}
Therefore,
\begin{align}
& \rE_\bX e^{s \underline{I}'(f_{\bX}(A):E|\rho)} \nonumber \\
\le &
\rE_\bX 
\sum_a P(a) 
\Tr \rho_a (M  (\rho^E)^{-1/2} (\sum_{a':f_{\bX}(a')=f_{\bX}(a)} P(a') \rho_{a'} ) (\rho^E)^{-1/2})^{s} \nonumber \\
\le &
\sum_a P(a) \Tr \rho_a 
(M  (\rho^E)^{-1/2} \rE_\bX (\sum_{a':f_{\bX}(a')=f_{\bX}(a)} P(a') \rho_{a'} ) (\rho^E)^{-1/2})^{s} 
\Label{8-15-d} \\
= &
\sum_a P(a) \Tr \rho_a (M  (\rho^E)^{-1/2} (P(a) \rho_{a}
+ \rE_\bX (\sum_{a':f_{\bX}(a')=f_{\bX}(a),a \neq a'} P(a') \rho_{a'}) ) (\rho^E)^{-1/2})^{s} \nonumber \\
\le &
\sum_a P(a) \Tr \rho_a (M  (\rho^E)^{-1/2} (P(a) \rho_{a}
+ \frac{1}{M} (\sum_{a':a \neq a'} P(a') \rho_{a'}) ) (\rho^E)^{-1/2})^{s} 
\Label{8-15-e} \\
\le &
\sum_a P(a) \Tr \rho_a (M  (\rho^E)^{-1/2} (P(a) \rho_{a}+ \frac{1}{M}\rho^E ) (\rho^E)^{-1/2})^{s} 
\nonumber \\
= &
\sum_a P(a) \Tr \rho_a (I+M P(a) (\rho^E)^{-1/2}\rho_{a}(\rho^E)^{-1/2} )^s \nonumber \\
\le &
\sum_a P(a) \Tr \rho_a (I+M^s P(a)^s ((\rho^E)^{-1/2} \rho_{a}(\rho^E)^{-1/2} )^s) \Label{8-15-1}\\
= &
1+ M^s \sum_a P(a)^{1+s} \Tr \rho_a ((\rho^E)^{-1/2} \rho_{a}(\rho^E)^{-1/2} )^s) \nonumber \\
=&
1+ M^s e^{- s\overline{H}^*_{1+s}(A|E|\rho)}\nonumber 
\end{align}
where
(\ref{8-15-d}) follows from the matrix convexity of $x \mapsto x^s$,
and
(\ref{8-15-e}) follows from Condition \ref{C1} and the matrix monotonicity of $x \mapsto x^s$,
and 
(\ref{8-15-1}) follows from Lemma \ref{L9}.

\bibliographystyle{IEEE}

\end{document}